\documentclass[12pt]{article}

\makeatletter
\def\input@path{{styles/}}
\makeatother

\def\UseBibLatex{1}

\ifx\sarielComp\undefined%
\newcommand{\SarielComp}[1]{}
\newcommand{\NotSarielComp}[1]{#1}%
\else
\newcommand{\SarielComp}[1]{#1}%
\newcommand{\NotSarielComp}[1]{}%
\fi
\newcommand{\IfPrinterVer}[2]{#2}%

\newcommand{\UsePackage}[1]{%
  \IfFileExists{../styles/#1.sty}{%
      \usepackage{../styles/#1}%
   }{%
      \IfFileExists{./styles/#1.sty}{%
         \usepackage{styles/#1}%
      }{%
         \usepackage{#1}%
      }%
   }%
}

\usepackage[cm]{fullpage}%
\usepackage{amsmath}%
\usepackage{amssymb}%
\usepackage{bm}%
\usepackage{caption}
\usepackage{xcolor}%
\usepackage{upgreek}%
\usepackage{euscript}%
\usepackage{caption}%

\usepackage{mathcalb}%

\usepackage{marvosym}

\SarielComp{\usepackage{sariel_colors}}%

\usepackage[amsmath,thmmarks]{ntheorem}%
\theoremseparator{.}%

\usepackage{titlesec}%
\titlelabel{\thetitle. }%
\usepackage{xcolor}%
\usepackage{mleftright}%
\usepackage{xspace}%
\usepackage{scalerel}%

\usepackage[inline]{enumitem}

\newlist{compactenumA}{enumerate}{5}%
\setlist[compactenumA]{topsep=0pt,itemsep=-1ex,partopsep=1ex,parsep=1ex,%
   label=(\Alph*)}%

\newlist{compactenumi}{enumerate}{5}%
\setlist[compactenumi]{topsep=0pt,itemsep=-1ex,partopsep=1ex,parsep=1ex,%
   label=(\roman*)}%

\usepackage{ifluatex}
\usepackage{ifxetex}

\ifluatex %
      \usepackage{fontspec}
      \usepackage[utf8]{luainputenc}
\else
       \ifxetex %
          \usepackage{fontspec}
       \else
          \usepackage[T1]{fontenc}
          \usepackage[utf8]{inputenc}
       \fi
\fi

\providecommand{\BibLatexMode}[1]{}
\providecommand{\BibTexMode}[1]{#1}

\ifx\UseBibLatex\undefined%
  \renewcommand{\BibLatexMode}[1]{}
  \renewcommand{\BibTexMode}[1]{#1}
\else
  \renewcommand{\BibLatexMode}[1]{#1}
  \renewcommand{\BibTexMode}[1]{}
\fi

\BibLatexMode{%
   \usepackage[bibencoding=utf8,style=alphabetic,backend=biber]{biblatex}%
   \usepackage{sariel_biblatex}%
}

\usepackage{hyperref}%

\IfPrinterVer{%
   \usepackage{hyperref}%
}{%
   \usepackage{hyperref}%
   \hypersetup{%
      breaklinks,%
      ocgcolorlinks, colorlinks=true,%
      urlcolor=[rgb]{0.25,0.0,0.0},%
      linkcolor=[rgb]{0.5,0.0,0.0},%
      citecolor=[rgb]{0,0.2,0.445},%
      filecolor=[rgb]{0,0,0.4},
      anchorcolor=[rgb]={0.0,0.1,0.2}%
   }
}

\definecolor{blue25}{rgb}{0,0,0.7}
\providecommand{\emphic}[2]{%
   \textcolor{blue25}{%
      \textbf{\emph{#1}}}%
   \index{#2}}
\providecommand{\emphi}[1]{\emphic{#1}{#1}}

\definecolor{almostblack}{rgb}{0, 0, 0.3}

\providecommand{\emphw}[1]{}%
\renewcommand{\emphw}[1]{{\textcolor{almostblack}{\emph{#1}}}}%

\theoremseparator{.}%

\theoremstyle{plain}%
\newtheorem{theorem}{Theorem}[section]

\newtheorem{claim}[theorem]{Claim}%

\newtheorem{observation}[theorem]{Observation}

\theoremstyle{definition}%
\theoremheaderfont{\sf} \theorembodyfont{\upshape}%
\newtheorem*{remark:unnumbered}[theorem]{Remark}%

\newtheorem{defn}[theorem]{Definition}

\newcommand{\myqedsymbol}{\rule{2mm}{2mm}}

\theoremheaderfont{\em}%
\theorembodyfont{\upshape}%
\theoremstyle{nonumberplain}%
\theoremseparator{}%
\theoremsymbol{\myqedsymbol}%
\newtheorem{proof}{Proof:}%

\newcommand{\atgen}{\symbol{'100}}
\newcommand{\SarielThanks}[1]{\thanks{Department of Computer Science;
      University of Illinois; 201 N. Goodwin Avenue; Urbana, IL,
      61801, USA; {\tt sariel\atgen{}illinois.edu}; {\tt
         \url{http://sarielhp.org/}.} #1}}

\newcommand{\MitchellThanks}[1]{%
   \thanks{%
      Department of Computer Science;
      University of Illinois; 201 N. Goodwin Avenue; Urbana, IL,
      61801, USA; {\tt mfjones2\atgen{}illinois.edu}; {\tt
         \url{http://mfjones2.web.engr.illinois.edu/}.} #1}}

\numberwithin{figure}{section}%
\numberwithin{table}{section}%
\numberwithin{equation}{section}%

\newcommand{\HLink}[2]{\hyperref[#2]{#1~\ref*{#2}}}
\newcommand{\HLinkSuffix}[3]{\hyperref[#2]{#1\ref*{#2}{#3}}}

\newcommand{\figlab}[1]{\label{fig:#1}}
\newcommand{\figref}[1]{\HLink{Figure}{fig:#1}}

\newcommand{\apndlab}[1]{\label{apnd:#1}}

\newcommand{\obslab}[1]{\label{observation:#1}}
\newcommand{\obsref}[1]{\HLink{Observation}{observation:#1}}
\newcommand{\obsrefY}[2]{\hyperref[observation:#1]{#2}}

\newcommand{\clmlab}[1]{\label{claim:#1}}
\newcommand{\clmref}[1]{\HLink{Claim}{claim:#1}}

\providecommand{\eqlab}[1]{}%
\renewcommand{\eqlab}[1]{\label{equation:#1}}

\providecommand{\remove}[1]{}%
\renewcommand{\remove}[1]{}%

\newcommand{\pth}[2][\!]{\mleft({#2}\mright)}%

\newcommand{\cardin}[1]{\left| {#1} \right|}%

\renewcommand{\Re}{\mathbb{R}}%

\DefineNamedColor{named}{OliveGreen} {cmyk}{0.44,0,0.95,0.90}

\setcounter{secnumdepth}{5}

\newcommand{\eps}{\varepsilon}

\newcommand{\expandY}[2]{{#1_{\oplus #2}}}
\newcommand{\norm}[1]{\left\| {#1} \right\|}

\newcommand{\permut}[1]{\left\langle {#1} \right\rangle}%
\newcommand{\DotProd}[2]{\permut{{#1},{#2}}}%

\newcommand{\distY}[2]{\left\| {#1} - {#2} \right\|}
\newcommand{\dY}[2]{\left\| {#1} - {#2} \right\|}

\newlength{\savedparindent}
\newcommand{\SaveIndent}{\setlength{\savedparindent}{\parindent}}
\newcommand{\RestoreIndent}{\setlength{\parindent}{\savedparindent}}

\providecommand{\Mh}[1]{#1}
\renewcommand{\Mh}[1]{#1}

\newcommand{\pp}{\Mh{{p}}}

\providecommand{\Mh}[1]{#1}

\newcommand{\dSY}[2]{\Mh{\mathsf{d}}\pth{#1,#2}}%
\newcommand{\nnY}[2]{\Mh{\mathsf{n}}_{#2}\pth{#1}}%
\newcommand{\nnX}[1]{{#1}_\Mh{\mathsf{n}}}%
\newcommand{\hsX}[1]{\Mh{h}_{#1}}%
\newcommand{\body}{\Mh{\mathsf{C}}}
\newcommand{\bodyA}{\Mh{\mathsf{D}}}

\newcommand{\Line}{\mathsf{k}}%

\newcommand{\ang}{\ensuremath{\text{\Anglesign}}}

\BibLatexMode{%
   \bibliography{dudley_approx} }

\begin{document}

\title{A Proof of Dudley's Convex Approximation}%

\author{Sariel Har-Peled%
   \SarielThanks{Work on this paper was partially supported by a NSF
      AF awards CCF-1421231, and %
      CCF-1217462.  %
   }%
   \and%
   Mitchell Jones%
   \MitchellThanks{}%
}

\date{\today}

\maketitle

\begin{abstract}
    We provide a self contained proof of a result of Dudley
    \cite{d-mescs-74}, which shows that a bounded convex-body in
    $\Re^d$ can be $\eps$-approximated, by the intersection of
    $O_d\bigl(\eps^{-(d-1)/2} \bigr)$ halfspaces, where $O_d$ hides
    constants that depends on $d$.
\end{abstract}

\section{Statement and proof}
\apndlab{convex:approx}

For a closed convex body $\body \subseteq \Re^d$, let
$\expandY{\body}{\eps}$ denote the set of all points in $\Re^d$ in
distance at most $\eps$ from $\body$. In particular,
$\body \subseteq \expandY{\body}{\eps}$, and the Hausdorff distance
between $\body$ and $\expandY{\body}{\eps}$ is $\eps$.  For a point
$p \in \Re^d$, the \emphi{projection} of $p$ to $\body$ is the point
$\nnX{p} = \nnY{p}{\body} = \arg \min_{x \in X} \dY{p}{x}$. Thus, the
\emphw{distance} of $p$ from $\body$ is
$\dSY{p}{\body} = \dY{p}{\nnX{p}}$.

\begin{defn}
    For a closed convex set $\body \subseteq \Re^d$, a vector
    $v \neq 0$ is a \emphw{normal} at $p$ to $\body$, if
    $\DotProd{v}{p} = \max_{ q \in \body} \DotProd{v}{q}$, where
    $\DotProd{v}{p}$ denotes the dot-product of $v$ and $p$.
\end{defn}

\begin{observation}
    \obslab{same}%
    For any point $u \in \Re^d \setminus \body$, we have that
    $u - \nnX{u}$ is normal at $\nnX{u}$ to $\body$. Furthermore, for
    all $t\geq 0$, the nearest-neighbor to $\nnX{u} + t(u -\nnX{u})$
    on $\body$ is $\nnX{u}$.
\end{observation}

\begin{observation}[Projection into a convex set is a contraction]
    \obslab{contraction}%
    For a compact convex set $X \subseteq \Re^d$, and any points
    $p,q \in \Re^d$, let $p' = \nnY{p}{X}$ and $q' = \nnY{q}{X}$. We
    have $\dY{p'}{ q'} \leq \dY{p}{q}$. Indeed, consider the segment
    $s = p'q' \subseteq X$, and let $\Line$ be the line spanned by
    $s$, and observe that
    $\dY{p'}{ q'} = \dY{\nnY{p}{s}}{\nnY{q}{s}} \leq
    \dY{\nnY{p}{\Line}}{\nnY{q}{\Line}} \leq \dY{p}{q}$.
\end{observation}

\begin{theorem}[\cite{d-mescs-74}]
    For $d \geq 2$, let $\body$ be a closed convex body in $\Re^d$,
    such that $\body$ is contained in the ball of radius $1$ centered
    at the origin.  For a parameter $\eps \in (0,1)$, one can compute
    a convex body $\bodyA$, which is the intersection of
    $O_d(1/\eps^{(d-1)/2})$ halfspaces, such that
    $\body \subseteq \bodyA \subseteq \expandY{\body}{\eps}$, where
    $O_d$ hides a constant exponential in $d$.
\end{theorem}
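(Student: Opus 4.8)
The plan is to realize $D$ as an intersection of supporting halfspaces of $\body$ whose outer normals are induced by a net of directions that is dense enough that the ``creases'' between adjacent halfspaces protrude beyond $\body$ by at most $\eps$. The decisive feature is that, thanks to convexity, a net of angular spacing $\delta$ produces an error of order $\delta^2$ rather than $\delta$; hence it suffices to take $\delta \sim \sqrt{\eps d}$, and such a net on a $(d-1)$-dimensional sphere has only $O_d\pth{\eps^{-(d-1)/2}}$ points. Concretely, set $R = 2d$ and let $\Net$ be a finite $\delta$-cover of the sphere $\Set{w \in \Re^d}{\norm{w} = R}$ (every sphere point lies within Euclidean distance $\delta$ of $\Net$), where $\delta = \sqrt{\eps d/2}$. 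We may assume $\eps \le d/2$, as otherwise the target count is $O_d(1)$ and running the construction with $\eps$ replaced by $d/2$ already yields $O_d(1)$ halfspaces with error $\le \eps$. For each $p \in \Net$ let $q_p$ be the nearest point of $\body$ to $p$; since $\body \subseteq \Set{w}{\norm{w} \le d}$ we have $\norm{p - q_p} \ge d > 0$. Let $H_p = \Set{w}{\DotProd{w - q_p}{p - q_p} \le 0}$ and put $D = \bigcap_{p \in \Net} H_p$. The variational characterization of the metric projection gives $\DotProd{z - q_p}{p - q_p} \le 0$ for all $z \in \body$, so $\body \subseteq H_p$ for every $p$, and hence $\body \subseteq D$; this is the easy inclusion.

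The heart of the argument is the reverse inclusion $D \subseteq \expandY{\body}{\eps}$, which I would prove in contrapositive: given $x$ with $\dist{x}{\body} = t > \eps$, I exhibit a net point $p$ with $x \notin H_p$. Let $y$ be the nearest point of $\body$ to $x$ and $e_1 = (x-y)/t$ the corresponding outer unit normal, so $x = y + t e_1$. Following the ray $y + \sigma e_1$ outward, it meets the sphere at a point $z = y + \sigma_0 e_1$ with $\sigma_0 \in [d, 3d]$ (from $\norm{y} \le d$ and $\norm{z} = 2d$); moreover $y$ is itself the nearest point of $\body$ to $z$, since every point on the outward normal ray at $y$ projects back to $y$. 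Choose $p \in \Net$ with $\norm{p - z} \le \delta$. Because the metric projection onto $\body$ is $1$-Lipschitz, $b := q_p - y$ satisfies $\norm{b} = \norm{q_p - q_z} \le \norm{p - z} \le \delta$, and because $e_1$ is an outer normal at $y$ while $q_p \in \body$, we obtain the crucial sign condition $\DotProd{b}{e_1} \le 0$. Writing $a := p - z$ (so $\norm{a} \le \delta$) and $p - q_p = \sigma_0 e_1 + a - b$, a direct expansion gives
\begin{equation*}
\DotProd{x - q_p}{\,p - q_p} = t\sigma_0 + t\DotProd{e_1}{a} - t\DotProd{e_1}{b} - \sigma_0 \DotProd{b}{e_1} - \DotProd{b}{a} + \norm{b}^2 .
\end{equation*}
The two terms carrying $\DotProd{b}{e_1}$ are nonnegative by the sign condition, and the only genuinely first-order dangerous term, $t\DotProd{e_1}{a}$, is tamed because it is weighted by the small quantity $t$, so it is at least $-t\delta$. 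Hence the expression is at least $t\sigma_0 - t\delta - \delta^2 \ge t(d-\delta) - \delta^2$, which is strictly positive whenever $t > \delta^2/(d-\delta)$. With $\delta = \sqrt{\eps d/2}$ and $\eps \le d/2$ one has $\delta^2/(d-\delta) \le (\eps d/2)/(d/2) = \eps$, so any $x$ with $t > \eps$ is excluded from $H_p$; equivalently $D \subseteq \expandY{\body}{\eps}$.

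Finally, a standard packing estimate finishes the count: a maximal $\delta$-separated subset of the sphere of radius $R = 2d$ is a $\delta$-cover, and comparing volumes of the disjoint balls of radius $\delta/2$ centered at its points bounds its cardinality by $O_d\pth{(R/\delta)^{d-1}}$, so $\cardin{\Net} = O_d\pth{(d/\sqrt{\eps d})^{d-1}} = O_d\pth{\eps^{-(d-1)/2}}$, matching the claimed facet count. The main obstacle, and the only non-routine point, is the error analysis above: one must recognize that the naive Lipschitz bound (which would only give error $O(\delta)$ and the far weaker count $\eps^{-(d-1)}$) is lossy, and that the convexity sign condition $\DotProd{q_p - y}{e_1} \le 0$ together with the $t$-weighting of the first-order term upgrades the error to $O(\delta^2/d)$ — precisely the source of the halving of the exponent.
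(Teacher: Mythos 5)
Your proposal is correct, and your construction is exactly the paper's: a $\delta$-net on the sphere of radius $2d$ with $\delta = \Theta(\sqrt{\eps d})$, nearest-point projections onto $\body$, and the supporting halfspaces of $\body$ at those projections. Where you genuinely diverge is in verifying the containment $D \subseteq \expandY{\body}{\eps}$. The paper argues from the inside out: it takes a boundary point $p$ of $\body$, follows its outward normal to the sphere, bounds the angle $\gamma$ between that normal and the normal of the nearby halfspace via a triangle-area computation (giving $\sin \gamma \leq 4\delta/d$), and then converts this into the distance bound $\delta \sin\gamma = O(\delta^2/d)$ by a small planar argument inside a three-dimensional affine slice. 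You instead argue by contrapositive from the outside in: for a point $x$ at distance $t > \eps$ from $\body$ you expand $\DotProd{x - q_p}{p - q_p}$ algebraically and use the variational inequality $\DotProd{q_p - y}{e_1} \leq 0$ to discard the dangerous first-order terms, leaving $t(d - \delta) - \delta^2 > 0$. Both arguments capture the same quantitative phenomenon (error $O(\delta^2/d)$ rather than $O(\delta)$, which is what halves the exponent), and I checked your inner-product expansion and the sign analysis; they are right, as are the reduction to $\eps \leq d/2$ and the packing count. Your route buys two things: it avoids the trigonometric and affine-slice bookkeeping entirely, and it directly certifies that every point farther than $\eps$ from $\body$ is cut off by some halfspace, whereas the paper's version only bounds the distance from points of $\partial \body$ to $\partial D$ and needs a final, rather terse, step to convert that into $D \subseteq \expandY{\body}{\eps}$. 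The cost is that the geometric picture (nearby normals make a small angle) is less visible in your computation, but as a self-contained proof it is arguably cleaner.
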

\begin{proof}
    Let $S$ be the sphere of radius $3$ centered at the origin, and
    let $Q$ be a maximal $\delta$-packing of $S$, where
    $\delta = \sqrt{\eps}$. We remind the reader that a set
    $Q \subseteq S$ is a \emphi{$\delta$-packing} if \smallskip%
    \begin{compactenumi}
        \smallskip%
        \item for any point $u \in S$, there is a point $v \in Q$,
        such that $\dY{u}{v}\leq \delta$, and

        \smallskip%
        \item for any two points $u,v \in Q$, we have that
        $\dY{u}{v} \geq \delta$.
    \end{compactenumi}
    \smallskip%
    It is easy to verify that
    $\cardin{Q} = \Theta_d(1/ \delta^{d-1}) =
    \Theta_d(\eps^{-(d-1)/2})$, see \cite[Section 6.3]{b-fpc-04}.  For
    a point $u \in \Re^d \setminus \body$, consider the halfspace
    $\hsX{u}$ containing $\body$, whose boundary passes through
    $\nnX{u}$ and is orthogonal to the vector $u- \nnX{u}$.

    \begin{figure}%
        \centerline{\includegraphics{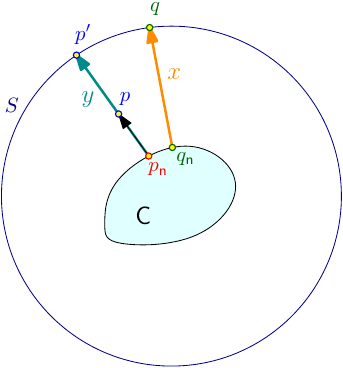}}
        \caption{}
        \figlab{project}
    \end{figure}

    It is clear that
    $\body \subseteq \bodyA = \bigcap_{q \in Q} \hsX{q}\Bigr.$. As for
    the other direction, consider any point
    $\pp \in \partial \bodyA \setminus \body$. We claim that
    $\dSY{p}{\body} \leq \eps$.  The ray $\nu$ emanating from
    $\nnX{p}$ in the direction of $\pp - \nnX{p}$ hits $S$ at a point
    $p'$. Let $q \in Q$ be the nearest point in the $\delta$-packing
    $Q$ to $p'$, see \figref{project}.  By \obsref{same}, we have
    $\nnX{p'} = \nnX{p}$. This implies that
    \begin{math}
        \distY{ \nnX{p}}{ \nnX{q}}%
        \leq%
        \distY{p'}{q}%
        \leq%
        \delta,
    \end{math}
    as projection into a convex set is a
    \obsrefY{contraction}{contraction}.  Thus, for $x = q - \nnX{q}$
    and $y = p' - \nnX{p}$, we have
    \begin{equation*}
        \displaystyle%
        \distY{x}{y} %
        =%
        \norm{\bigl.  q - \nnX{q} - \pth{ p' - \nnX{p}} } =
        \norm{\bigl.  \pth{\nnX{p} - \nnX{q}} + \pth{q - p'} }%
        \leq
        \norm{\nnX{p} - \nnX{q}} + \norm{q - p'}%
        \leq
        2 \delta.
    \end{equation*}

    \SaveIndent%
    \vspace{-0.5cm}%

    \begin{figure}[h]
        \centering%
        {\includegraphics[page=1]{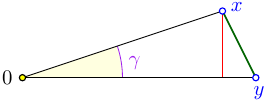}}%
        \captionof{figure}{The triangle formed by $x,y$ and the
           origin.}%
        \figlab{x:y}
    \end{figure}
    Observe that %
    \begin{math}
        \norm{x} = \| q - \nnX{q}\| \geq \dSY{\body}{S} \geq 2.
    \end{math}
    Let $\gamma$ be the angle between $x$ and $y$.  The distance of
    $x$ from the line spanned by $y$ is
    $\norm{x} \sin\gamma \leq \dY{x}{y}$, see \figref{x:y}.  Thus, we
    have
    \begin{equation*}
        \sin \gamma%
        \leq%
        \frac{\dY{x}{y}}{\norm{x}}
        \leq%
        \frac{2\delta}{2}
        \leq
        \delta.
    \end{equation*}

    If $\nnX{p} \in \partial \hsX{q}$ (or $\nnX{p} =\nnX{q}$) then
    $p = \nnX{p}$, and $\dSY{p}{\body}=0$ (if this is not clear, see
    \clmref{a} below). A similar argument shows that
    $\gamma \leq \pi/3$, see \figref{x:y:2}.

    As $\nnX{p} \neq \nnX{q}$, and $q \in S$, it follows that they are
    all distinct, and not all lying on a common line (if the later is
    not clear, see \clmref{b} below).

    \noindent%
    \begin{minipage}{0,58\linewidth}
        \medskip%
        \RestoreIndent%

        Let $f$ be the two dimensional plane that contains the points
        $\nnX{q},q$, and $\nnX{p}$, see \figref{small}.  Let $p''$ be
        the projection of $\nnX{p}$ to $\partial \hsX{q}$. Observe
        that $p'' - \nnX{p} = \alpha x$, for some $\alpha > 0$, and
        thus $p'' \in f$.  If $f \subseteq \partial \hsX{q}$ then
        $\nnX{p} \in \partial \hsX{q}$, and we are done.  Thus, the
        set $f \cap \partial \hsX{q}$ is a line, denoted by
        $\Line$. This line contains $\nnX{q}$ and $p''$.  Observe that
        $p''$ is in the interior of $\Re^d \setminus \hsX{p}$ and
        $\nnX{q} \in \hsX{p}$, and thus
        $p'' \nnX{q} \cap \partial{\hsX{p}} \neq \emptyset$.  Let $t$
        be the closest point in $p'' \nnX{q} \cap \partial{\hsX{p}}$
        to $\nnX{q}$ (i.e., $t$ might be $\nnX{q}$).

    \end{minipage}
    \hfill
    \begin{minipage}{0.42\linewidth}
        \captionof{figure}{The plane $f$.}
        \figlab{small}%
        \vspace*{-1.0cm}%
        \centerline{\includegraphics{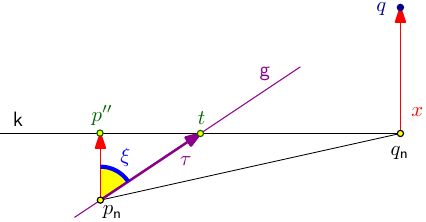}}

    \end{minipage}

    \bigskip%
    Let $\tau = t - \nnX{p}$, and consider the angle
    $\xi = \ang(\tau,x)$.  Observe that $\tau$ is orthogonal to $y$
    (as both its endpoints are in $\partial \hsX{p}$).  Namely,
    $0 \leq \ang(y,x) \leq \gamma$ and $\ang(\tau,y) = \pi/2$.
    Applying the triangle inequality to angles, we have
    \begin{equation*}
        \xi
        =%
        \ang(\tau,x)%
        \geq
        \ang(\tau,y) - \ang(y,x) %
        \geq%
        \pi/2 - \gamma.\Bigr.
    \end{equation*}
    Inspecting \figref{small}, implies that $\xi \in [0,\pi/2]$,
    implying that $\xi \in[\pi/2-\gamma, \pi/2]$.  Thus, we have
    \begin{equation*}
        \dSY{p}{\body}
        =%
        \dY{p}{\nnX{p}}%
        \leq
        \dSY{\nnX{p}}{\partial \hsX{q}}
        \leq%
        \dY{\nnX{p}}{ p'' } %
        \leq%
        \norm{\nnX{p} - \nnX{q}} \cos \xi%
        \leq%
        \delta \sin \gamma%
        \leq%
        \delta^2
        \leq %
        \eps.
    \end{equation*}
\end{proof}

\subsection{More details}

Here, we elaborate over some details in the above proof that might not
be obvious to the reader.

\noindent%
\begin{minipage}{0.45\linewidth}
    \centering
    \includegraphics{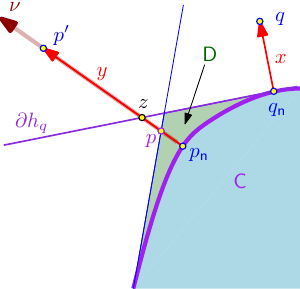}
    \captionof{figure}{}
    \figlab{ray}
\end{minipage}
\hfill
\begin{minipage}{0.45\linewidth}
    \centering%
    \includegraphics[page=3]{figs/x_y}%
    \captionof{figure}{The angle $\gamma$ is minimized if $x$ and
       $y$ are as short as possible (i.e., $2$), and $\dY{x}{y}$
       is maximized (i.e., $2\delta \leq 2$).}
    \figlab{x:y:2}
\end{minipage}

    \begin{claim}
    \clmlab{a}%
    If $\nnX{p} \in \partial \hsX{q}$ then $\dSY{p}{\body}=0$.
\end{claim}
\begin{proof}
    If $p \in \body$, then there is nothing prove, so assume
    $p \notin C$.  By definition,
    $\dSY{p}{\body} = \dY{p}{\nnX{p}} > 0$.  By the packing property,
    the angle between $x$ and $y$ is at most $\pi/3$, see
    \figref{x:y:2}. By construction,
    $\nnX{p}, \nnX{q} \in \partial \body \subseteq \hsX{q}$.  The ray
    $\nu$ emanating from $\nnX{p}$ in the direction of $y$, contains
    $p$.  Because $\ang(x,y) \leq \pi/3$, the ray $\nu$ hits
    $\partial \hsX{q}$ at a point $z = \nu \cap \partial \hsX{q}$, see
    \figref{ray}.  The point $p$ lies on the segment $ z
    \nnX{p}$. Thus, if $p_n \in \partial \hsX{q}$, then the segment
    $z \nnX{p}$ is a point, and then $z=\nnX{p}$ and $\nnX{p}=p$. This
    implies that $\dSY{p}{\body} = \dY{p}{\nnX{p}} = 0$.
\end{proof}

\begin{figure}[h!]
    \centering%
    \includegraphics{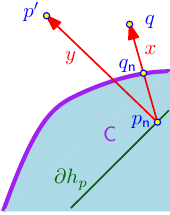}
    \caption{}
    \figlab{on:line}
\end{figure}

\begin{claim}
    \clmlab{b}%
    The points $\nnX{p}, \nnX{q}, q$ do not all lie on a common line.
\end{claim}
\begin{proof}
    The points $\nnX{p}, \nnX{q}$, and $q$ are all distinct.  If these
    three points lie on a common line (say in this order), see
    \figref{on:line}, then $\nnX{q}$ lies in the interior of
    $\Re^d \setminus \hsX{p}$, since the angle between the two normals
    at $\nnX{p}$ and $\nnX{q}$ (i.e., $y$ and $x$, respectively) is
    $\gamma$, which is at most $\pi/3$. This implies that
    $\\nnX{q} \notin \body$, which is impossible.

    The case that the points appear in the order $\nnX{q}, \nnX{p}$,
    and $q$ along the line is handled in a similar fashion.
\end{proof}

\paragraph*{Acknowledgement.}

We thank Michael Lesnick and Kenneth McCabe for pointing out issues in
earlier versions of this writeup, and providing detailed feedback that
significantly improved this writeup.

\BibTexMode{%
   \bibliographystyle{alpha}%
   \bibliography{dudley_approx}%
}

\BibLatexMode{\printbibliography}

\end{document}